\newcommand{\N}{\mathds{N}}
\newcommand{\NP}{\textit{NP}}
\newcommand{\eg}{e.g.\@\xspace}
\newcommand{\ie}{i.e.\@\xspace}
\newcommand{\obda}{w.\,l.\,o.\,g.\@\xspace} 
\newcommand{\V}{\mathcal{V}}
\newtheorem{theorem}{Theorem}[section]
\newtheorem{lemma}[theorem]{Lemma}
\newenvironment{proof}[1][Proof]{\begin{trivlist}
\item[\hskip \labelsep {\bfseries #1}]}{\end{trivlist}}
\newenvironment{definition}[1][Definition]{\begin{trivlist}
\item[\hskip \labelsep {\bfseries #1}]}{\end{trivlist}}
\newcommand{\qed}{\nobreak \ifvmode \relax \else
      \ifdim\lastskip<1.5em \hskip-\lastskip
      \hskip1.5em plus0em minus0.5em \fi \nobreak
      \vrule height0.75em width0.5em depth0.25em\fi}
\newcommand*{\etc}{%
    \@ifnextchar{.}%
        {etc}%
        {etc.\@\xspace}%
}
\newcommand{\ignore}[1]{}
\title{Complexity of the General Chromatic Art Gallery Problem}
\author{Sándor~P.~Fekete\thanks{Department of Computer Science, TU Braunschweig, Germany.
		{\tt s.fekete@tu-bs.de, stephan.friedrichs@tu-bs.de, mhsaar@gmail.com}}
	\and Stephan~Friedrichs*
	\and Michael~Hemmer*}
\begin{document}

\maketitle

\begin{abstract}
For a polygonal region $P$ with $n$ vertices, a {\em guard cover} $G$ is a set
of points in $P$, such that any point in $P$ can be seen from a point in $G$.
In a {\em colored guard cover}, every element in a guard cover is assigned a
color, such that no two guards with the same color have overlapping visibility
regions. The Chromatic Art Gallery Problem (CAGP) asks for the minimum number
of colors for which a colored guard cover exists.

We provide first complexity results for the general CAGP, in which arbitrary
guard positions are allowed. We show that it is already \NP-hard
to decide whether two colors suffice for covering a polygon with holes, as well
as for any fixed number $k\geq 2$ of colors. For simple polygons, we show
that computing the minimum number of colors is \NP-hard
for $\Theta(n)$ colors.
\end{abstract}

\section{Introduction}
\label{sec:introduction}

Consider a robot moving in a polygonal environment $P$. At any point $p$ in $P$,
the robot can navigate by referring to a beacon that is directly
visible from $p$. In order to ensure unique orientation, each beacon
has a ``color''; the same color may be used for different beacons, if their
visibility regions do not overlap. What is the minimum number $\chi_G(P)$ of colors
for covering $P$, and where should the beacons be placed?

This is the {\sc Chromatic Art Gallery Problem} (CAGP), which was first introduced by
Erickson and LaValle~\cite{el-cagp-10}.
Clearly, any feasible set of beacons for the CAGP must also be a feasible solution for the classical
{\sc Art Gallery Problem} (AGP). 
However, the number of guards and their positions for optimal AGP and CAGP solutions can be quite different, even in cases as simple as the one shown in Figure~\ref{fig:example}.

\begin{figure}
  \begin{center}
    \def\svgwidth{.4\textwidth}
    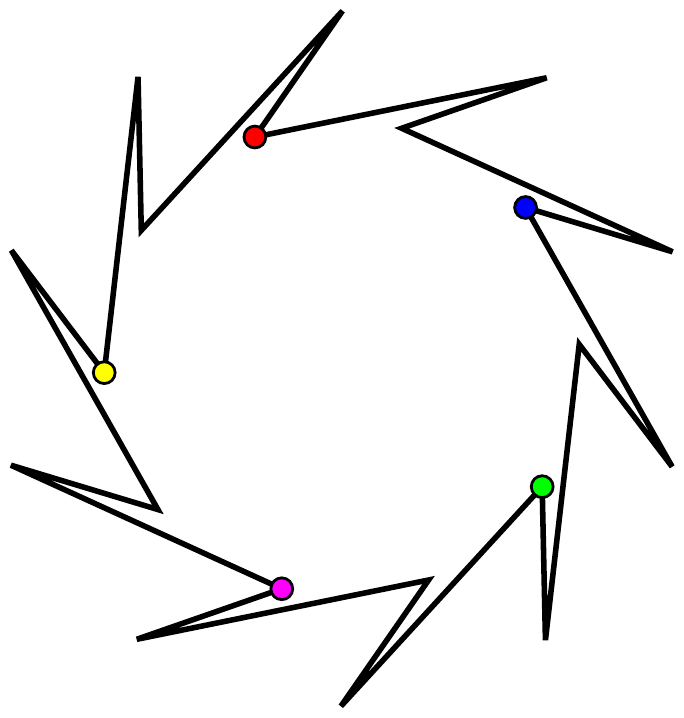
    \def\svgwidth{.4\textwidth}
    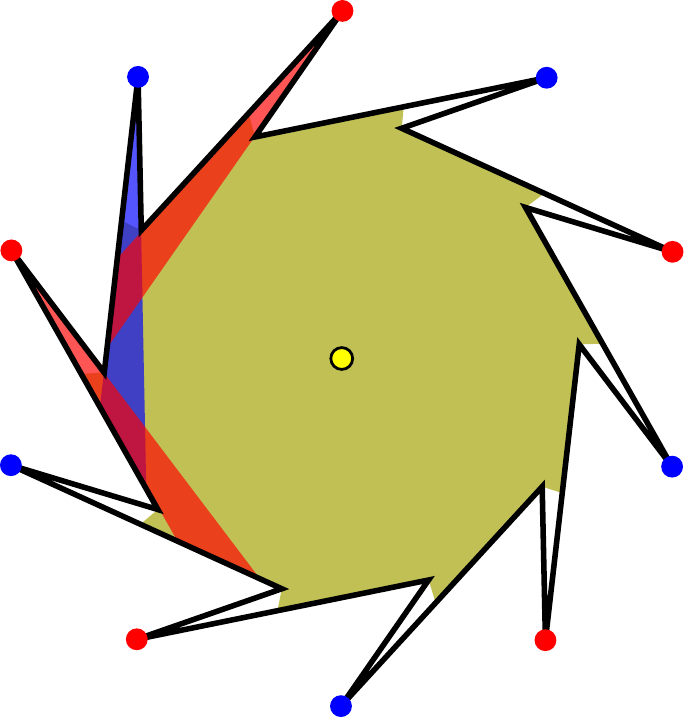
  \end{center}
  \caption{%
    An example polygon with $n=20$ vertices. 
     A minimum-cardinality guard cover with $n/4$ guards  requires $n/4$ colors (left) while 
     a minimum-color guard cover has $n/2+1$ guards and requires only $3$ colors (right).}
  \label{fig:example}
\end{figure}

{\bf Related Work.}
The closely related AGP is \NP-hard~\cite{ll-ccagp-86}, even for simple polygons. 
See~\cite{r-agta-87,s-rrag-92,u-agip-00} for three surveys with a wide variety of results.
More recently, there has been work on developing practical optimization methods for computing optimal AGP solutions~\cite{kbf+-esbga-12,trs-qosag-13,ffk+-fagp-13b,bdd-pgpc-13}.

The CAGP was first proposed by Erickson and LaValle, who presented a number of results, most notably upper and lower bounds on the number of colors for different classes of polygons~\cite{el-cagp-10,el-agael-11,el-hmlcn-11}.
In particular, they noted that the construction of Lee and Lin~\cite{ll-ccagp-86} which establishes \NP-hardness of determining a minimum-cardinality guard cover for a simple polygon $P$ can be used to prove \NP-hardness of computing $\chi_G(P)$, as long as all guards have to be picked from a specified candidate set.
However, there is no straightforward way to extend this construction for showing \NP-hardness of the CAGP with {\em arbitrary} guard positions.
An upcoming paper by Zambon et al.~\cite{zrs-aeaftcagp-14} discusses worst-case bounds, as well as exact methods for computing optimal solutions.

Bärtschi and Suri introduced the \textsc{Conflict-Free CAGP}, in which they relax the chromatic requirements~\cite{bs-cfcag-12}:
Visibility regions of guards with the same color may overlap, as long as there is always one uniquely colored guard visible.
They showed an upper bound of $O(\log^2 n)$ on the worst-case number of guards; this is significantly smaller
than the lower bound of $\Theta(n)$ for $\chi_G(P)$ established by Erickson and LaValle~\cite{el-agael-11}.

Another loosely related line of research is by Biro et al.~\cite{bik+-bbagr-13}, who consider beacons of a different kind:
in order to get to a new location, a robot aims for a sequence of destinations for shortest geodesic paths, each from a finite set of beacons.
Their results include a tight worst-case bound of
$\left\lfloor\frac{n}{2}\right\rfloor-1$ for the number of beacons and a proof of \NP-hardness for finding a smallest set of beacons in a simple polygon.

{\bf Our Results.}
We show that the CAGP is \NP-hard for a number of scenarios, even if we are {\em not} restricted to a
fixed set of guard positions. In particular, we show the following:
\begin{itemize}
\item For a polygon with holes, it is \NP-hard to decide whether there is a $k$-colorable guard cover, for any fixed $k\geq 2$.
Because $k=1$ requires a single guard, i.e., a star-shaped polygon, we get a complete complexity analysis.
\item For a simple polygon, it is \NP-hard to compute an optimal solution for the CAGP. The proof establishes this for
$\chi_G(P)\in\Theta(n)$.
\end{itemize}

\section{Preliminaries}
\label{sec:preliminaries}

Let $P$ be a polygon.
$P$ is \emph{simple} if its boundary is connected.
For $p \in P$, $\V(p) \subseteq P$ denotes the \emph{visibility polygon} of $p$, \ie, all points $p'$ that can be connected to $p$ using the line segment $\overline{pp'} \subset P$.
For any $G \subseteq P$, we denote by $\V(G) = \bigcup_{g \in G} \V(g)$.
A finite $G \subset P$ with $\V(G) = P$ is called a \emph{guard cover} of $P$; $g \in G$ is a \emph{guard}.
We say that $g$ \emph{covers} all points in $\V(g)$.

Let $G$ be a guard cover of $P$ and let $c: G \to \{1, \dots, k\}$ be a coloring of the guards.
Then $(G, c)$ is a \emph{$k$-coloring} of $P$ if no point sees two guards of the same color.
$\chi_G(P)$ is the \emph{chromatic (guard) number} of $P$, i.e., the minimal $k$, such that there is a $k$-coloring of $P$.
(The index $G$ in $\chi_G(P)$ as introduced in~\cite{el-agael-11,el-hmlcn-11}
does not refer to a specific guard set.)

\begin{definition}[Chromatic Art Gallery Problem]\label{def:cagp}
For $k \in \N$, the \emph{$k$-Chromatic Art Gallery Problem ($k$-CAGP)} is the following decision problem:
Given a polygon $P$, decide whether $\chi_G(P) \leq k$.
\end{definition}

We provide a simple proof of the \NP-hardness of $k$-CAGP for $k \geq 3$ in Section~\ref{sec:three-colorability} and a more complex one for the \NP-hardness of $2$-CAGP in Section~\ref{sec:two-colorability}.

\begin{lemma}[Needle lemma]\label{lem:needles}
Consider $k+1$ needles with end points $W = \{ w_1, \dots, w_{k+1} \}$, 
such that there is some $g \in \bigcap_{i=1}^{k+1} \V(w_i)$ with $\V(W) \subseteq \V(g)$, 
as in Figure~\ref{fig:needles}.
Then a $k$-coloring must place a guard in $V = \{ p \in P \mid |\V(p) \cap W| \geq 2 \}$, \eg at $g$.
\end{lemma}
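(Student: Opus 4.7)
The plan is to argue by contradiction: suppose $(G,c)$ is a $k$-coloring of $P$ with $G \cap V = \emptyset$. The key observation to exploit is that the hypothesis $\V(W) \subseteq \V(g)$ turns $g$ into a common witness that simultaneously sees any guard used to cover a needle endpoint $w_i$, so overlapping visibility at $g$ will force the coloring to use too many colors.

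First I would invoke feasibility: each endpoint $w_i$ must be covered, so pick $g_i \in G$ with $g_i \in \V(w_i)$ (using symmetry of visibility). Because $g_i \notin V$, the guard $g_i$ satisfies $|\V(g_i) \cap W| \leq 1$, so it sees exactly one point of $W$, namely $w_i$; in particular the $k+1$ guards $g_1, \dots, g_{k+1}$ are pairwise distinct, since any coincidence $g_i = g_j$ with $i\neq j$ would put that guard in $V$. Next I would chain the set inclusions $g_i \in \V(w_i) \subseteq \V(W) \subseteq \V(g)$, so $g$ is a single point of $P$ that sees every one of the $k+1$ distinct guards $g_1,\dots,g_{k+1}$. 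By the definition of a $k$-coloring no point may see two guards of the same color, forcing $c(g_1),\dots,c(g_{k+1})$ to be pairwise distinct and thus requiring $k+1 > k$ colors, a contradiction.

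The argument is essentially routine set-chasing once the right hypothesis is applied; the only subtle step is recognizing that the single assumption $G \cap V = \emptyset$ yields both the distinctness of the $g_i$ and the one-endpoint-per-guard bound. The parenthetical ``\eg\ at $g$'' in the statement is automatic: since $g \in \bigcap_i \V(w_i)$ we have $|\V(g) \cap W| = k+1 \geq 2$, so $g \in V$ and placing the required guard at $g$ is always legal.
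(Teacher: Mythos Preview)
Your proof is correct and follows essentially the same approach as the paper's: assume no guard lies in $V$, obtain $k+1$ distinct guards each covering a different $w_i$, and use $\V(W)\subseteq\V(g)$ to make $g$ a witness seeing all of them, contradicting the $k$-coloring condition. Your version is simply more explicit about why the $g_i$ are pairwise distinct (via the definition of $V$), which the paper leaves implicit.
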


\begin{proof}
Suppose there is no guard in $V$.
Then there must be a guard in each $\V(w_i)$, requiring a 
total of at least $k+1$ guards; two of them share the same color.
Because $\V(W) \subseteq \V(g)$, $g$ sees two guards with the same 
color, which is impossible in a $k$-coloring.
\end{proof}

\begin{figure}
  \begin{center}
    \def\svgwidth{.5\textwidth}
    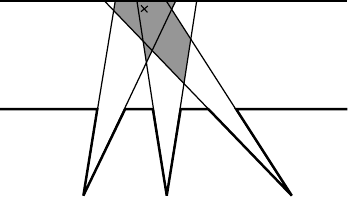
  \end{center}
  \caption{Forcing one guard into the region $V$ (gray).}
  \label{fig:needles}
\end{figure}

Note that for every $k \in \N$, there is a polygon that is not $k$-colorable; see
Erickson and LaValle~\cite{el-agael-11}.

%
%

\section{3- and (3+k)-Colorability}
\label{sec:three-colorability}

The \NP-hardness of $3$-CAGP follows from the \NP-hardness of deciding whether a planar graph $H$ is
3-colorable-complete~\cite{gj-cai-79}.
The idea is shown in continuous lines in Figure~\ref{fig:3color}:
Each node of the graph $H$ is turned into a convex region of the polygon $P$, and needles force exactly one guard into each of them by Lemma~\ref{lem:needles}.
Those guards cover $P$; it is easy to see that $P$ is $3$-colorable iff $H$ is $3$-colorable.
For $1 \leq k \in \N$, the construction can be generalized to $(3+k)$-CAGP by adding the structures drawn in dashed lines to Figure~\ref{fig:3color}.
We omit details for lack of space.

\begin{figure}
	\begin{center}
		\def\svgwidth{\linewidth}
		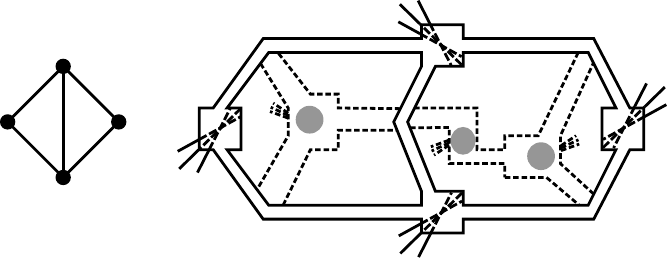
	\end{center}
	\caption{$3$- and $(3+k)$-colorability.
		$3$-colorability is shown in continuous lines, $(3+k)$-colorability is achieved by adding the structures drawn in dashed lines.
		The gray areas contain $k$ or $3$ forced guards.}
	\label{fig:3color}
\end{figure}

\section{2-Colorability}
\label{sec:two-colorability}
{\small
\begin{figure*}
  \subfigure[Variable gadget.]{
    \def\svgwidth{.60\textwidth}
    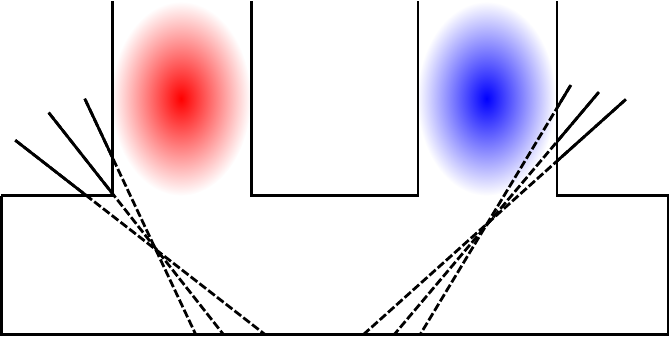
    \label{fig:gadget-variable}
  }
  \hfill
  \subfigure[Inverter gadget.]{
    \def\svgwidth{.09\textwidth}
    \hspace{7mm}
    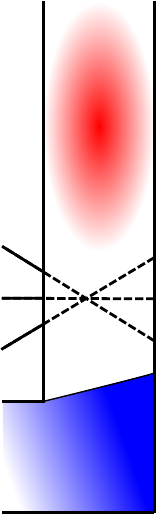
    \hspace{7mm}
    \label{fig:gadget-inverter}
  }
  \subfigure[Crossing gadget. $\V(w_1) \cap \V(w_2)$ indicated in gray. ]{
    \def\svgwidth{.60\textwidth}
    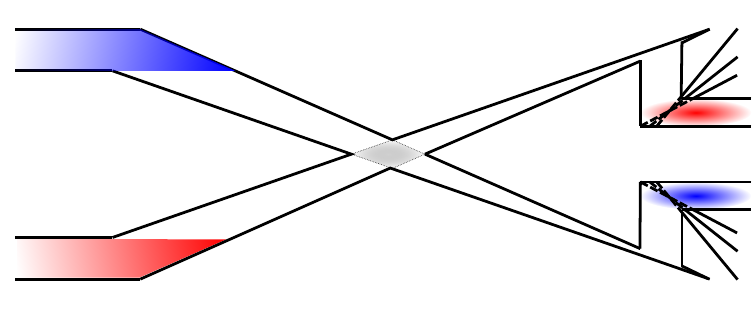
    \label{fig:gadget-crossing}
  }
  \hfill
  \subfigure[And gadget.]{
    \def\svgwidth{.15\textwidth}
    \hspace{5mm}
    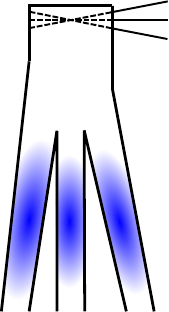
    \label{fig:gadget-and}
  }
  \subfigure[Multiplexer gadget.]{
    \def\svgwidth{.20\textwidth}
    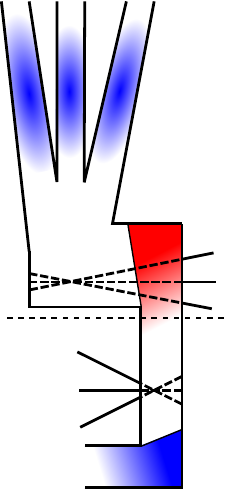
    \hspace{5mm}
    \label{fig:gadget-multiplexer}
  }
  \hfill
  \subfigure[Or gadget, mixed input.]{
    \def\svgwidth{.20\textwidth}
    \hspace{5mm}
    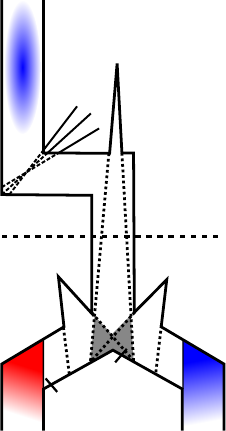
    \hspace{5mm}
    \label{fig:gadget-or-rb}
  }
  \hfill
  \subfigure[Or gadget, uniform input.]{
    \def\svgwidth{.20\textwidth}
    \hspace{5mm}
    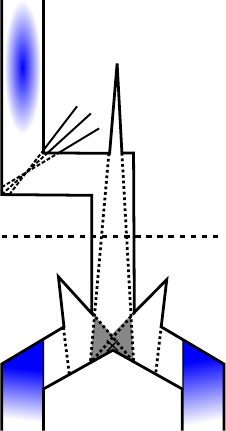
    \hspace{5mm}
    \label{fig:gadget-or-bb}
  }
  \caption{\textsc{3SAT} to $2$-CAGP reduction gadgets.}
  \label{fig:gadgets}
\end{figure*}
}

We use a reduction from \textsc{3SAT} for showing \NP-hardness of deciding whether $\chi_G(P)=2$.
Throughout this section, we associate colors with Boolean values;
\obda, blue corresponds to {\em true} and red to {\em false}.
The output of every gadget is (according to Lemma~\ref{lem:needles}) a guard at a specific position, colored in red or blue which entirely covers all output tunnels serving as input for the next gadgets.

{\bf Variable Gadget.}
This gadget uses the construction in Figure~\ref{fig:gadget-variable} to encode one decision variable $x_i$.
The needles enforce locating two guards at the indicated positions.
The color used for the left guard is interpreted as the value of $x_i$:
Blue means {\em true}, red means {\em false}.

{\bf Inverter Gadget.}
\label{sec:inverter-gadget}
The gadget in Figure~\ref{fig:gadget-inverter} inverts colors.
Its input area is illuminated by one color; the guard forced to
position $g$ must have the other, or a point in the lower right corner can
observe two guards of the same color.

{\bf Crossing Gadget.}
Crossings of channels propagating colors is achieved by the
gadget in Figure~\ref{fig:gadget-crossing}.
For any guard $g$ that sees $w_1$,
$\V(w_1) \subseteq \V(g)$. As $\V(w_1)$ intersects
both the input area and $\V(g_1)$, $g_1$ must have
the same color as the corresponding input; the same holds for $w_2$.
If both input areas are covered by guards of the same color, then
one guard of the opposite color is placed in $\V(w_1) \cap \V(w_2)$.
Otherwise, we place two guards of different color outside of
$\V(w_1) \cap \V(w_2)$, \eg, at $w_1$ and $w_2$.

\ignore{
Crossings of channels propagating colors is achieved by the
gadget in Figure~\ref{fig:gadget-crossing}.
If both input areas are covered by guards of the same color,
\obda blue, no blue guard can be placed in $\V(w_1) \cup \V(w_2)$,
so covering $w_3$ and $w_4$ is only possible using exactly one
red guard in $\V(w_1) \cap \V(w_2)$.
If the inputs are blue and red, no guard can be placed in
$\V(w_1) \cap \V(w_2)$.
One guard is needed in each corridor, \eg, near $w_3$ and $w_4$.
In any case, opposite input colors are propagated to the
dashed line; attaching inverters provides a complete crossing gadget.
}

{\bf Multiplexer Gadget.}
Multiplexing is achieved with the gadget in Figure~\ref{fig:gadget-multiplexer}.
It uses an inverter gadget, and forces a guard to position $g_2$ which covers all output tunnels.
The gadget is easily generalized to an arbitrary number of output tunnels.

{\bf Or Gadget.}
The gadget in Figures~\ref{fig:gadget-or-rb}--\ref{fig:gadget-or-bb} is a binary {\em or}, allowing easy construction of a ternary one.
We argue that there exists a guard cover that colors the output area blue, iff at least one input area is blue.

If two different input colors are applied, a guard cover with blue output exists:
$g_1$ blue, $g_2$ red, and $g_3$ blue in Figure~\ref{fig:gadget-or-rb}.
The same is true for blue/blue input:
$g_1$ red and $g_3$ blue in Figure~\ref{fig:gadget-or-bb}.

If the input is red/red, the output cannot be blue:
By Lemma~\ref{lem:needles}, the gray area must contain a guard $g$,
which can only be blue.
$\V(g) \cap \V(g_3) \neq \emptyset$, so $g_3$ is red.

{\bf And Gadget.}
This gadget is similar to the multiplexer gadget, see Figure~\ref{fig:gadget-and}.
The guard forced to position $g$ can be colored iff all input regions have the same color.
Note that this gadget forces all inputs to be identical, either true or false.

{\bf The 3SAT Reduction.}
Any \textsc{3SAT} instance $S = C_1 \land \dots \land C_m$ with variables $x_1, \dots, x_n$ can be encoded as a $2$-CAGP instance using the gadgets and the overall layout depicted in Figures~\ref{fig:gadgets} and~\ref{fig:3sat}.
It is $2$-colorable iff there exists a variable assignment that sets all clauses to true or all clauses to false;
if all clauses are false, we can invert the values of all variables, and obtain a satisfying truth assignment.
Thus, we have established the following theorem.

%

\begin{theorem}
$2$-CAGP, \ie, deciding if a polygon is $2$-colorable, is \NP-hard.
\end{theorem}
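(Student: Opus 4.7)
The plan is to complete the many-one reduction from \textsc{3SAT} to $2$-CAGP whose gadgets have been described above. Given an instance $S = C_1 \land \cdots \land C_m$ over variables $x_1,\dots,x_n$, I would assemble a polygon $P_S$ along the lines sketched in Figure~\ref{fig:3sat}: one variable gadget per $x_i$ produces two complementary forced guards (blue encoding $x_i$, red encoding $\overline{x_i}$); a multiplexer attached to each output creates as many copies of that literal signal as the literal has occurrences in $S$; crossings are inserted wherever two signal channels have to pass each other in the planar drawing; each clause is realized by a ternary or-gadget (two cascaded binary or-gadgets); and the $m$ clause outputs are finally combined by a chain of and-gadgets whose top guard can be colored at all only when its inputs agree. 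The total construction uses $\bigO(n+m)$ gadgets of constant complexity, so $P_S$ has polynomially many vertices and is computable in polynomial time.

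For the correctness equivalence, I would argue both directions by simply composing the per-gadget invariants. Given a satisfying assignment $\alpha$ for $S$, color each variable guard as prescribed by $\alpha$; the local analyses of the six gadgets then propagate consistent colors through multiplexers, inverters, crossings, and or-gadgets, so every clause carries a blue output and the final and-gadget can be $2$-colored. Conversely, from a $2$-coloring of $P_S$ one reads off the variable-guard colors as a Boolean assignment $\alpha$. The and-gadget forces its $m$ inputs to share a single color: if blue, every clause is \emph{true} under $\alpha$ and $\alpha$ satisfies $S$; if red, every literal in every clause is \emph{false} under $\alpha$, so flipping all variables turns every literal true and hence satisfies $S$. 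Together with the polynomial construction this yields $\chi_G(P_S)\leq 2$ iff $S$ is satisfiable, proving \NP-hardness.

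The main obstacle I expect is the composition step, namely arguing rigorously that gluing the gadgets into one polygon introduces no unintended visibility: a stray line of sight between two forced-guard positions in different gadgets could impose a spurious color-equality constraint and invalidate the reduction. I would handle this by making the connecting tunnels narrow and long enough that any point inside a tunnel sees only the two endpoint gadgets, so the Needle-Lemma analysis that justifies each gadget in isolation remains valid after composition. The second geometric obstacle, non-planarity of the literal-to-clause incidence graph, is precisely what the crossing gadget of Figure~\ref{fig:gadget-crossing} resolves, and once these two issues are dispatched the theorem follows.
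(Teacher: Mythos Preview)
Your proposal is correct and follows essentially the same reduction as the paper, including the key observation that the and-gadget merely forces all clause outputs to agree, so the all-red case is handled by flipping every variable. One minor slip: the number of crossing gadgets can be quadratic in the number of literal occurrences, so the construction is polynomial-size but not $\bigO(n+m)$ as you claim.
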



\begin{figure}
  \begin{center}
    \def\svgwidth{.9\textwidth}
    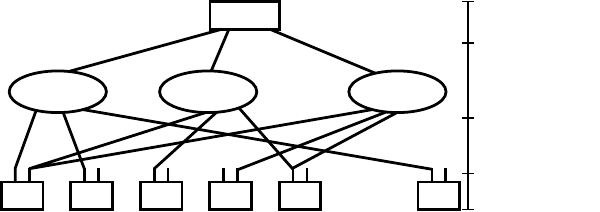
  \end{center}
  \caption{\textsc{3SAT} reduction gadget usage.}
  \label{fig:3sat}
\end{figure}

\section{Simple Polygons}
\label{sec:simple}

Our proof of \NP-hardness of the general CAGP for simple polygons is based on a
reduction from computing a minimum-cardinality set of points for covering a given set of lines;
this auxiliary problem is easily seen to be \NP-hard by geometric duality applied to a result of Megiddo and Tamir~\cite{mt-cllfp-82},
who showed that it is \NP-hard to determine the minimum number of lines to cover a set of points in the plane.

\begin{theorem}
It is \NP-hard to determine the chromatic number of a simple polygon.
\end{theorem}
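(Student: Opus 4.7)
The plan is to reduce from the \NP-hard problem referenced immediately before the theorem: given a set $L = \{\ell_1, \dots, \ell_m\}$ of lines in the plane, find the minimum number $\tau(L)$ of points that jointly hit every line. Call this problem \textsc{Line Piercing}; it is \NP-hard by geometric duality from Megiddo and Tamir's result. From such an instance I would build a simple polygon $P$ consisting of a large convex ``arena'' $R$ together with one small needle-based gadget per line, attached to the boundary of $R$, so that $\chi_G(P) = \tau(L) + O(1)$.

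For each line $\ell_i$ I would place $k+1$ thin needle spikes on $\partial R$, in the style of Lemma~\ref{lem:needles}, with tips $W_i = \{w_{i,1}, \dots, w_{i,k+1}\}$ chosen so that the forcing region $V_i = \{p \in P \mid |\V(p) \cap W_i| \geq 2\}$ is a very thin strip lying along $\ell_i \cap R$. The lemma then forces any valid coloring to place at least one guard inside each $V_i$, i.e., essentially on line $\ell_i$ inside $R$. Because $R$ is convex, every pair of guards placed in the arena sees each other and must receive distinct colors, so $\chi_G(P) \geq $ (number of arena guards) $\geq \tau(L)$. Conversely, a piercing set $\{p_1, \dots, p_\tau\}$ for $L$ induces a $\tau$-coloring: install a uniquely-colored guard at each $p_j$, which from its position on $\ell_i$ covers every gadget of every line through it; the arena itself is seen from any single arena point, and an $O(1)$ additive overhead of color classes handles the thin spike interiors.

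The main obstacle is the geometric realization of this idea as a single simple polygon. I anticipate three recurring pitfalls: (i) unintended visibility between needles of different gadgets, which would create spurious same-color constraints; (ii) a forcing region $V_i$ that is not close enough to $\ell_i$ to preserve the reduction; and (iii) leakage of color demand from portions of $P$ lying outside $R$. I would handle (i) by routing the spikes through disjoint angular sectors of $\partial R$ and making them thin enough that their visibility cones separate outside $V_i$; (ii) by shrinking the spike widths so that each $V_i$ lies inside an arbitrarily small tube around $\ell_i$, combined with a stability argument showing that $\tau(L)$ is preserved under a sufficiently small strip enlargement of the $\ell_i$; and (iii) by designing the connecting corridors so that each is entirely visible from the single arena guard on the incident line, contributing no new color. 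Since each gadget adds only $O(1)$ vertices and $\tau(L)$ can be as large as $\Theta(m)$, the reduction certifies \NP-hardness in the advertised regime $\chi_G(P) \in \Theta(n)$.
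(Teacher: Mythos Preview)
Your proposal has a genuine gap: the parameter $k$ in ``$k+1$ thin needle spikes'' is never fixed, and it cannot be fixed consistently. Lemma~\ref{lem:needles} only forces a guard into $V_i$ for colorings that use \emph{at most $k$ colors}; since $\chi_G(P)$ is the unknown you are trying to compute, you cannot choose $k$ in advance without circularity. If you set $k$ to a safe upper bound such as $m$ (the number of lines), each gadget needs $m+1$ needles, contradicting your claim that ``each gadget adds only $O(1)$ vertices'' and breaking the $\chi_G(P)\in\Theta(n)$ conclusion. If you keep $k$ constant, the lemma says nothing about the optimal coloring once $\chi_G(P)>k$, so the step ``the lemma then forces any valid coloring to place at least one guard inside each $V_i$'' is simply false. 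Either way, the lower bound $\chi_G(P)\ge\tau(L)$ is not established by your argument.

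The paper sidesteps all of this with a much simpler construction. It attaches only \emph{two} narrow niches per line, one at each point where the line meets a large bounding square (the ``spike box''). The key observation is not the needle lemma but the following: in the spike box, \emph{every} pair of points has overlapping visibility regions. Hence any two guards conflict, so $\chi_G(P)$ equals the minimum number of guards outright. That minimum in turn equals $\tau(L)$, because covering both niches of a line forces a guard onto that line, and conversely any piercing set (after moving niche guards into the square) is already a guard cover. No forcing regions, no perturbation or stability argument, no additive colour overhead, and the vertex count is $\Theta(m)$, giving $\chi_G(P)\in\Theta(n)$ directly.
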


\begin{figure}
	\begin{center}
                \includegraphics[scale=.35]{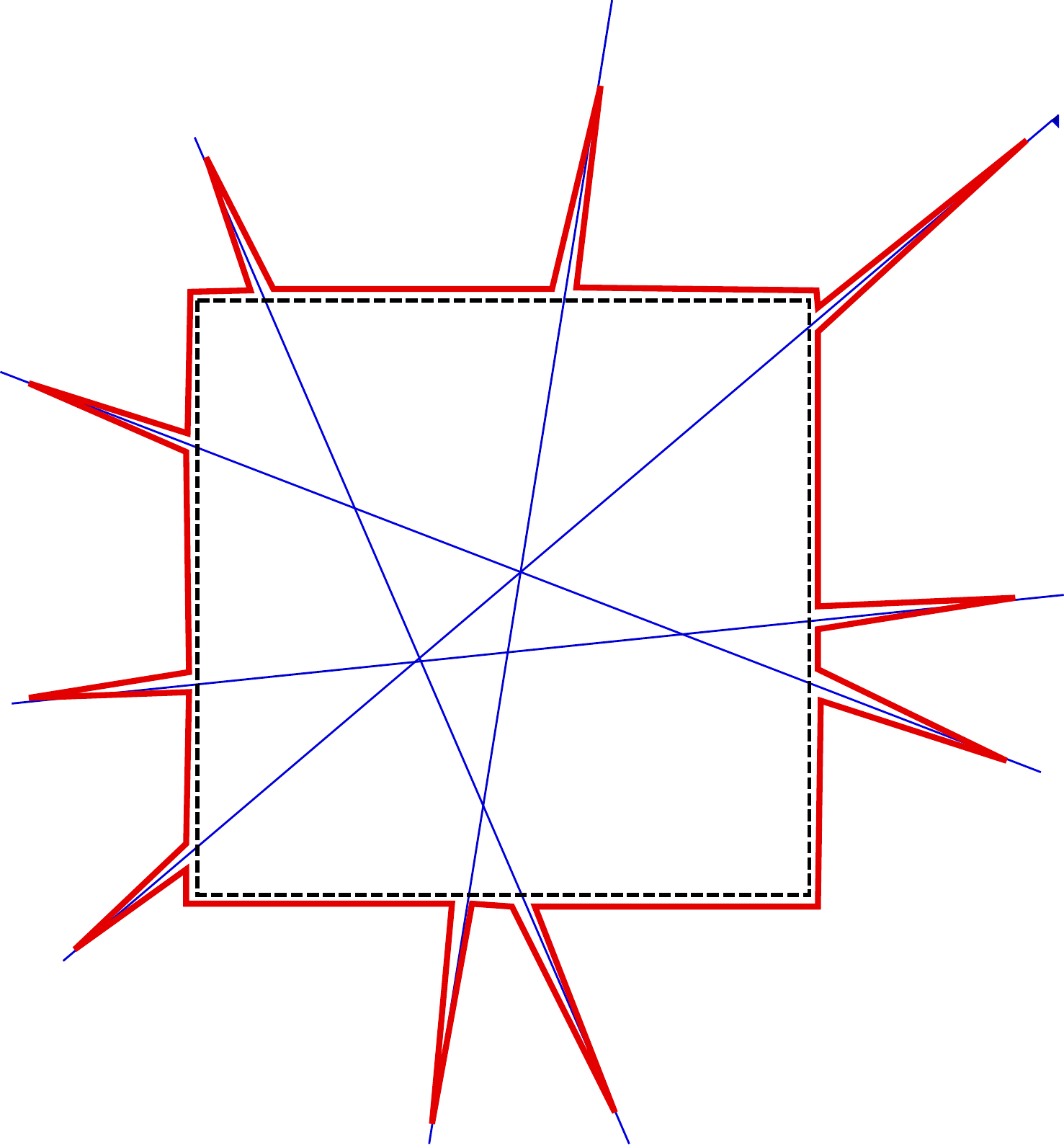}
	\end{center}
	\caption{\NP-hardness of the general CAGP for simple polygons: A minimum-color guard cover of the spike box (red) corresponds to a minimum-cardinality
                 point cover of a set of lines (blue).}
	\label{fig:spike}
\end{figure}

\begin{proof}
Refer to Figure~\ref{fig:spike}.
For a given set of lines, construct a ``spike box'', which is formed by a square that
contains all intersection points of the lines, and has two narrow niche extensions for each line, one at either intersection
with the square. Note that the visibility regions of any two points in the spike box overlap; thus, minimizing the number of colors is equivalent to
minimizing the number of guards.
Now any guard cover corresponds to a point cover of the lines; conversely, any
line cover can be converted to a guard cover of the spike box: if there is a guard placed in a niche, replace it by
one inside the square.
\end{proof}

\section{Conclusion}
\label{sec:conclusion}

A number of open problems remain. These include the complexity of the CAGP for simple polygons and fixed $\chi_G(P)=k$, in particular, $k=2$.
This remains open even for a finite set of candidate locations, as the \NP-hardness proof by Erickson and LaValle~\cite{el-hmlcn-11} requires large $\chi_G(P)$.
Among the other open questions for fixed guard locations is the complexity of determining the chromatic number of a given guard set in a simple
polygon; the claim
by Erickson and LaValle stated in~\cite{el-hmlcn-11} that this problem has a polynomial solution is still unproven, as the corresponding
conflict graph does {\em not} have to be chordal: this can be seen from the n/2 blue and green locations at the spikes in Figure~\ref{fig:example}.

\section*{Acknowledgement}

Michael Hemmer received support by the DFG, contract KR~3133/1-1 (Kunst!).

{\small
\bibliographystyle{abbrv}
\bibliography{bibliography}
}

\end{document}